\newtheorem{theorem}{Theorem}[section]
 \theoremstyle{definition}
\newtheorem{definition}{Definition}[section]
\theoremstyle{remark} \newtheorem{remark}{Remark}
\newcommand{\mbf}[1]{\mathbf{#1}}
\newcommand{\mbs}[1]{\boldsymbol{#1}}
\begin{document}


\title{\textbf{Leontief Meets Shannon -- Measuring the Complexity of the
  Economic System}}
\author{Dave Zachariah and Paul Cockshott\footnote{E-mail:
    \texttt{dave.zachariah@it.uu.se} and \texttt{william.cockshott@glasgow.ac.uk}}
  \thanks{Thanks to Nathaniel Lane at the Institute for   International Economic Studies, Stockholm University, for kindly
  providing data on the South Korean economy.}}
\date{} 
\maketitle

\begin{abstract}
We develop a complexity measure for large-scale economic
  systems based on Shannon's concept of entropy. By adopting Leontief's
  perspective of the production process as a circular flow, we
  formulate the process as a Markov chain. Then we derive a
  measure of economic complexity as the average number of bits required
  to encode the flow of goods and services in the production process. We
  illustrate this measure using data from seven national economies, spanning
  several decades.
\end{abstract}

\section{Introduction}

 It appears self evident that an industrial economy is 
 more complex than a pre-industrial one, but how are we
 to measure this complexity?
 
 Is the complexity a matter of the number of people involved?
 That probably has something to do with it since the population
 density in Europe today is well above that in Europe 500 years
 ago. But that scarcely seems enough. Clearly along with the
 increase in the number of people has gone an increase in the
 number of distinct products, and associated with that an
 increase in thenumber of trades or specialisations that people occupy. 
 The classical economists viewed this social division of labour as a
 critical step in raising the productive capacity of an economy \cite{smith1974}.
 
 An initial approach might be to quantify complexity 
 simply as the number of trades followed. But that does not seem
 enough either. Suppose we have two economies, each with
 one hundred distinct trades. In the first economy, 90\% of the population
 are still farmers, with the special trades being distributed
 among the the 10\% townsfolk. In the second economy, the great
 majority are urban with the population more evenly distributed
 between trades. It would be reasonable to say that
 the 90\% rural country was economically simpler and less complex.
 
 So intuitively, complexity ought to depend both on the number of trades
 and the distribution of the population into them. One way
 of approaching it would be to say that in a complex economy
 more information has to be provided to describe the average
 life outcome of an individual: will they be a peasant, a smith,
 a taylor, a wheelwright etc. This notion is possible to
 operationalize, if we view the economy as a process
 which randomly assigns individuals places in the social division of
 labour. The occupations each individual ends up in is then a random
 variable $s$ drawn from a set of distinct trades.   
 Then we know from Shannon
 \cite{shannon1948} that the average amount of information, (in bits) required
 to describe where the individuals end up is given by the entropy, denoted $H(s)$.

The entropy has several nice properties as a measure
\cite{Palan2010_specializationindices}. It has a lower bound 0 and an
upper bound for a given a number of trades or economic sectors. It is also
invariant to the order in which one lists economic sectors, assuming
that the employment in each sector is given. However, $H(s)$ does not take into account that the economic
system constitutes a process with interconnected sectors. Twelve
complexity measures based on sectoral connectedness were surveyed in
\cite{LopesEtAl2008_survey}. Some of them are bounded but provide no
interpretable notion of complexity nor do they immediately relate to
the economic production process. Alternative notions of complexity
formulated from an algorithmic point of view do, however, provide an
interpretable meaning. One notion is to
measure the complexity of the interconnections of the economy in terms of the shortest algorithm
that could generate the economic structure
\cite{chaitin:1999}. Another is to measure the number of steps required for an
economy to solve its resource allocation problem
\cite{cockshott97}. These notions are dependent on how closely tied up
the economic sectors are. The more they are interlinked,
the greater the potential set of interactions, and the greater the
algorithmic complexity. If there are $n$ distinct sectors, then the
runtime complexity of solving for equilibrium prices in a fully connected
economy is on the order of $n^3$. The fact that market-based
economies with very large $n$ manage to coordinate production 
via a price mechanism, suggests that the complexity
 is significantly lower than this. One reason is that the
 interconnections between sectors in real economies are typically
 sparse, being more like Erd\"os-R\'enyi graphs 
 than fully connected ones \cite{reifferscheidt2014average}.

Leontief was one of the pioneers in developing input-output models of the entire
economic production system, which are essential in the construction of
national accounts data
\cite{Leontief1928_economy,Leontief1941_structure,Stone1961_input}. Starting from Leontief's
fundamental insights, we develop a complexity measure of an economic
system using ideas originating from Markov
\cite{Markov1906_rasprostranenie} and Shannon \cite{shannon1948}. The
measure quantifies the average number of bits required to describe the
flow of goods and services, from one sector to the next, when viewing
production as a continual process. This provides both a practical
interpretation, rooted in the engineering sciences, as well as a
notion related to the production process, unlike the measures surveyed
in \cite{LopesEtAl2008_survey}. The measure is applied to data from seven national economies, spanning several decades.

\section{Economic complexity}

From Leontief's point of view, the economic production system has
multiple inputs and multiple outputs of distinct goods and services
\cite{Leontief1986_io,Pasinetti1977_lectures,TenRaa2006_economics}. Under capitalist institutions,
it takes the form of production of commodities by means of
commodities \cite{sraffa}. To begin the analysis of such a system,
we partition the economy into sectors, each producing a distinct type of
good or service. A fraction
of output from sector $i$ is required for production in sector
$j$. A certain fraction is also used up internally. Sectors with outputs that
enter directly and indirectly in the production of all other sectors
are denoted `basic' \cite[ch.~2]{sraffa}. The basic sectors are labeled $i=1, \dots, n$. Together they form
an interconnected reproducing
economic system, which we denote as $\mathcal{S}$ and corresponds to
an irreducible and aperiodic graph. 

For sake of illustration, suppose labour is required to
initialize the production process and let the direct labour requirement
in each sector be denoted as $\ell_i$. Consider a unit of the total social
labour to be allocated to an initial sector
$$s_0 \in \{ 1, \dots, n\}$$
randomly. Then $\pi_i = \ell_i / \sum_j \ell_j$ is the probability
that a unit of labour is allocated to produce in sector $i$. That is,
$\Pr\{ s_0 = i \}$ equals $\pi_i$. Using this formalism, we can study the economic requirements of
production as a random process. The entropy of the random initial sector $s_0$ is then given by
\begin{equation}
  H(s_0) = -\sum^n_{i=1} \pi_i \: \log_2 \pi_i \: \geq 0,
\label{eq:0-entropy}
\end{equation}
which gives the average number of bits required to encode the initial
sector \cite{shannon1948}.\footnote{By convention, we have $0
  \log_2(0) = 0$, which is justified by continuity
  \cite{Cover&Thomas2012_elements}. This addresses the concern raised
  in \cite{Palan2010_specializationindices} for sectors where $\pi_i=0$.} To illustrate the economic meaning of this quantity in the case of labour, consider a
pre-industrial economy. Here the great bulk of the working population
are peasants and only a small proportion are employed in
non-agricultural sectors. In consequence, the entropy $H(s_0)$ in the pre-industrial
economy is low. Industrialisation takes people from the countryside,
and randomly casts them into a plethora of urban trades. This
transition process will then increase the entropy $H(s_0)$. The
maximum entropy is attained if all sectors require an equal amount of
labour. Then $H(s_0) = \log_2(n)$.

Production in sector $s_0$ yields outputs that are requirements in
a subsequent sector $s_1$. Analogous to the reasoning above, we
consider this as a random transition
$s_0 \rightarrow s_1$, which enables a stochastic formulation of Leontief's input-output model. Let $$f_{ij} \geq 0, \qquad \forall i, j \in \{1, \dots, n \}$$
denote the cross-sectoral requirements of output $i$ in sector $j$. 
\begin{definition}
An
$n \times n$ matrix $\mbf{P} = \{ p_{ij} \}$, where
$$p_{ij} \triangleq \frac{f_{ij}}{\sum^n_{j=1} f_{ij}}$$
is the fraction of sector $i$:s cross-sectoral output required for production in
sector $j$.
\end{definition}
 The production
requirements, in the form of forwarding cross-sectoral outputs from one
sector to the next, can now be modeled as a Markov chain: Consider a
unit of cross-sectoral output from $s_0$. It becomes an input in
sector $j$ with a probability $p_{ij}$. Thus we consider the
random transition $s_0 \rightarrow s_1$ to occur with a probability  $\Pr\{ s_1
= j  \: | \: s_0 = i \} = p_{ij}$. See Figure~\ref{fig:markov} for an
illustration.
\begin{figure}
  \begin{center}
    \includegraphics[width=0.28\columnwidth]{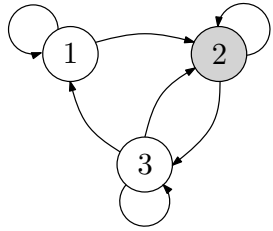}
  \end{center}
  \caption{Example of economic system $\mathcal{S}$ with $n=3$ sectors
  producing distinct outputs. The initial sector is $s_0 = 2$
  (highlighted). Each link is associated with a probability $p_{ij}$ that unit of output $i$ will be used up in sector $j$. This
  formalizes the production process as a transition from one sector to
the next.}
  \label{fig:markov}
\end{figure}
The entropy of the subsequent sector $s_1$ given the
prior sector is
\begin{equation}
  H(s_1 | s_0 = i ) = -\sum^n_{j=1} p_{ij}  \:  \log_2 p_{ij}.
\end{equation}
To illustrate the economic meaning of this quantity, consider an
industrial economy prior to electrification and suppose the output of sector $i$
is electricity generated by coal power plants. Initially this
output enters into the production of only a few sectors. In
consequence, the entropy $H(s_1 | s_0 = i)$ prior to electrification is low. With the introduction of
transmission lines and power grids, electricity becomes widely used
and $H(s_1 |
s_0 = i )$ rises. The number of bits required to encode $s_1$ given a
random initial state $s_0$ is obtained by simple averaging,
\begin{equation}
H(s_1 | s_0) = \sum^n_{i=1} \pi_i H(s_1 | s_0 = i ) .
\label{eq:1-entropy}
\end{equation}
Thus as the economy diversifies and more sectors require a complex mix
of inputs, the conditional entropy $H(s_1 | s_0)$ will increase. Note
that since the measure is logarithmic, an increment of one full bit corresponds to a doubling of complexity.

We can now extend above approach to trace a sequence of unit outputs
through subsequent sectors in  the production process. That is, the sequence of
$(k+1)$ random variables,
\begin{equation}
s_0 \: \rightarrow \: s_1 \: \rightarrow \: \cdots \: \rightarrow \:
s_k,
\label{eq:sequence}
\end{equation}
that represent the $k$th order production requirements of the economy
in a forward direction. The conditional entropy of the current sector
$s_k$ in the sequence is denoted
\begin{equation}
H(s_k | s_{k-1}, \dots, s_0 )
\label{eq:k-entropy}
\end{equation}
and can be derived by extension of the first-order conditional entropy
\eqref{eq:1-entropy}. As Leontief realized, the production process is a `circular flow' of
requirements \cite{Leontief1928_economy}. The sequence
\eqref{eq:sequence} is a therefore a representation of the entire 
process when $k$ tends to infinity. Given the
the properties of the sectors in $\mathcal{S}$, the distribution of
$s_k$ will then tend to a unique stationary distribution
\cite[ch.~4]{Cover&Thomas2012_elements} and under
stationarity, the conditional entropy of the sequence is a nonincreasing
quantity, i.e.,
\begin{equation*}
\begin{split}
H(s_k | s_{k-1}, \dots, s_0 ) &\leq H(s_k | s_{k-1}, \dots, s_1 ) \\
&= H(s_{k-1} | s_{k-2}, \dots, s_0 ).
\end{split}
\end{equation*}
In other words, while the production process is conceptualized as an unceasing circular flow, it is
possible to define a finite limit of its conditional entropy.
\begin{definition}[Economic complexity]
The entropy of the economic
  system $\mathcal{S}$ is defined as
$$H(\mathcal{S} ) \; \triangleq \; \lim_{k \rightarrow \infty} \; H(s_k | s_{k-1}, \dots, s_0 )
,$$
and represents the number of bits required to encode $s_k$ in the limit.
\end{definition}
The entropy $H(\mathcal{S} )$ is the average code length required to encode the
destination of an output in the production process. Thus it is a
natural measure of complexity of $\mathcal{S}$ in logarithmic scale. For any given
entropy, the economic system must effectively discriminate between
$2^{H(\mathcal{S} )}$ output destinations at each step of the process. The entropy of two systems,
$\mathcal{S}$ and $\mathcal{S}'$, enable a comparison across
time and space. Suppose $\mathcal{S}$ and $\mathcal{S}'$ both produce
basic four-wheeled automobiles, and that its production in $\mathcal{S}$
requires a complex mix of inputs, each of which in turn require
another a complex mix to produce. If $\mathcal{S}'$ produces
automobiles in a simpler manner, then $H(\mathcal{S})$ will be greater than 
$H(\mathcal{S}')$. Since the relative difference in the effective number of output
destinations is $2^{H(\mathcal{S} ) - H(\mathcal{S'} )}$, each
additional bit to $\mathcal{S}$ represents a doubling of the
complexity of the economy $\mathcal{S}$ relative to $\mathcal{S}'$.
\begin{theorem}
The entropy can be computed as
\begin{equation}
\boxed{H(\mathcal{S} ) = -\sum^n_{i=1} \sum^{n}_{j=1} \pi^\star_i \: p_{ij} \log_2
p_{ij},}
\label{eq:economicentropy}
\end{equation}
where the elements $\{ \pi^\star_i \}$ are given by the eigenvector $1\mbs{\pi}^\star = \mbf{P}
\mbs{\pi}^\star$ with eigenvalue that equals 1. That is, it is a
nontrivial solution to $(\mbf{I} - \mbf{P})\mbs{\pi}^\star =
\mbf{0}$. The entropy is bounded by
$$0 \leq H(\mathcal{S} ) \leq \log_2 n,$$
and attains its maximum value when the transitions to each sector are equiprobable.
\end{theorem}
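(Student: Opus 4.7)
The plan is to prove the formula, the stationary distribution characterization, and the two bounds in that order, each reducing to a standard tool (Markov property, Perron--Frobenius, concavity of entropy).

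First, I would use the Markov property of the chain defined by the transition matrix $\mbf{P}$. Since $\Pr\{s_k = j \mid s_{k-1} = i_{k-1}, \dots, s_0 = i_0 \} = p_{i_{k-1} j}$ depends only on $s_{k-1}$, the conditional entropy collapses as
\begin{equation*}
H(s_k \mid s_{k-1}, \dots, s_0) \;=\; H(s_k \mid s_{k-1}) \;=\; \sum_{i=1}^{n} \Pr\{s_{k-1}=i\} \cdot \Bigl(-\sum_{j=1}^{n} p_{ij}\log_2 p_{ij}\Bigr).
\end{equation*}
This reduces the problem of taking $k \to \infty$ to the problem of computing the limiting marginal of $s_{k-1}$.

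Next, I would invoke the Perron--Frobenius theorem for the chain on $\mathcal{S}$. The excerpt specifies that $\mathcal{S}$ corresponds to an irreducible, aperiodic graph, so $\mbf{P}$ admits a unique stationary distribution $\mbs{\pi}^\star$ and $\Pr\{s_{k-1}=i\} \to \pi_i^\star$ as $k\to\infty$, where $\mbs{\pi}^\star$ is the nontrivial solution of $(\mbf{I}-\mbf{P})\mbs{\pi}^\star = \mbf{0}$. Because the inner sum $-\sum_j p_{ij}\log_2 p_{ij}$ is a bounded constant depending only on row $i$, one may pass the limit through the finite outer sum by continuity, yielding the boxed formula \eqref{eq:economicentropy}.

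For the bounds, the lower bound $H(\mathcal{S}) \geq 0$ is immediate: every term $-\pi_i^\star p_{ij}\log_2 p_{ij}$ is nonnegative under the convention $0\log_2 0 = 0$, with equality iff each row $i$ with $\pi_i^\star > 0$ is a unit vector (deterministic transitions). For the upper bound, I would apply the standard inequality $-\sum_{j=1}^n p_{ij}\log_2 p_{ij} \leq \log_2 n$ (concavity of Shannon entropy on the probability simplex, i.e.\ Jensen applied to $-\log_2$), with equality iff $p_{ij} = 1/n$ for all $j$. Combining this with $\sum_i \pi_i^\star = 1$ gives $H(\mathcal{S}) \leq \log_2 n$, attained when every row of $\mbf{P}$ is uniform, in which case $\mbs{\pi}^\star$ is also uniform.

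The only real subtlety will be the interchange of limit and finite sum and the invocation of convergence to the stationary distribution; both are routine given the stated irreducibility and aperiodicity, so I expect the main obstacle to be purely notational (keeping the row-stochastic versus column-stochastic convention of $\mbf{P}$ consistent with the eigenvector equation $\mbf{P}\mbs{\pi}^\star = \mbs{\pi}^\star$ stated in the theorem), rather than conceptual.
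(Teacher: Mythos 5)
Your proposal is correct and is essentially the standard proof of the entropy rate of a stationary Markov chain, which is exactly what the paper invokes by citing Cover and Thomas, Theorem 4.2.4 (Markov property collapses the conditioning, convergence to the stationary distribution handles the limit, and row-wise concavity gives the bounds). Your parenthetical caution about conventions is well taken: with $\mbf{P}$ row-stochastic as defined, the stationary distribution is the \emph{left} eigenvector $\mbs{\pi}^{\star\top}\mbf{P} = \mbs{\pi}^{\star\top}$, so the paper's equation $(\mbf{I}-\mbf{P})\mbs{\pi}^\star = \mbf{0}$ should really read $(\mbf{I}-\mbf{P}^\top)\mbs{\pi}^\star = \mbf{0}$, but this does not affect the substance of your argument.
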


\begin{proof}
See \cite[thm.~4.2.4]{Cover&Thomas2012_elements}.
\end{proof}

The result \eqref{eq:economicentropy} provides an operational measure
that can readily be applied to existing input-output data collected by
national statistics services. Note that the $H(\mathcal{S} )$ is
invariant to the selection of the initial sector $s_0$.

\begin{remark}
\emph{Remark:} The economic system $\mathcal{S}$ can also be extended to include the
reproduction of its workforce. This is achieved by including the
households as a sector that outputs units of labour, with inputs in the
form of the real wage vector of the productive workforce, cf. \cite{Cockshott&Zachariah2006_hunting}.
\end{remark}

\section{Numerical examples}

In this section, we apply the $H(\mathcal{S})$ in
\eqref{eq:economicentropy} to input-output data from seven national
economies found in \cite{OECD_iodata,Korea_iodata}. The entropy measure
quantifies the complexity of an economy, and as it evolves over
time, $H(\mathcal{S})$ also registers the degree of structural economic change.

One limitation of the data is that the sectoral classification of the economy is not
consistent across all datasets. However, within each national economy
the classifications are sufficiently similar to provide meaningful
comparisons across time. Moreover, the number of basic sectors $n$ also
vary depending on classification level. To mitigate these effects we
also consider $H(\mathcal{S})$ as a percentage of its maximum
value. The results are presented in Table~\ref{tab:results} and Figure~\ref{fig:results}.
\begin{figure*}
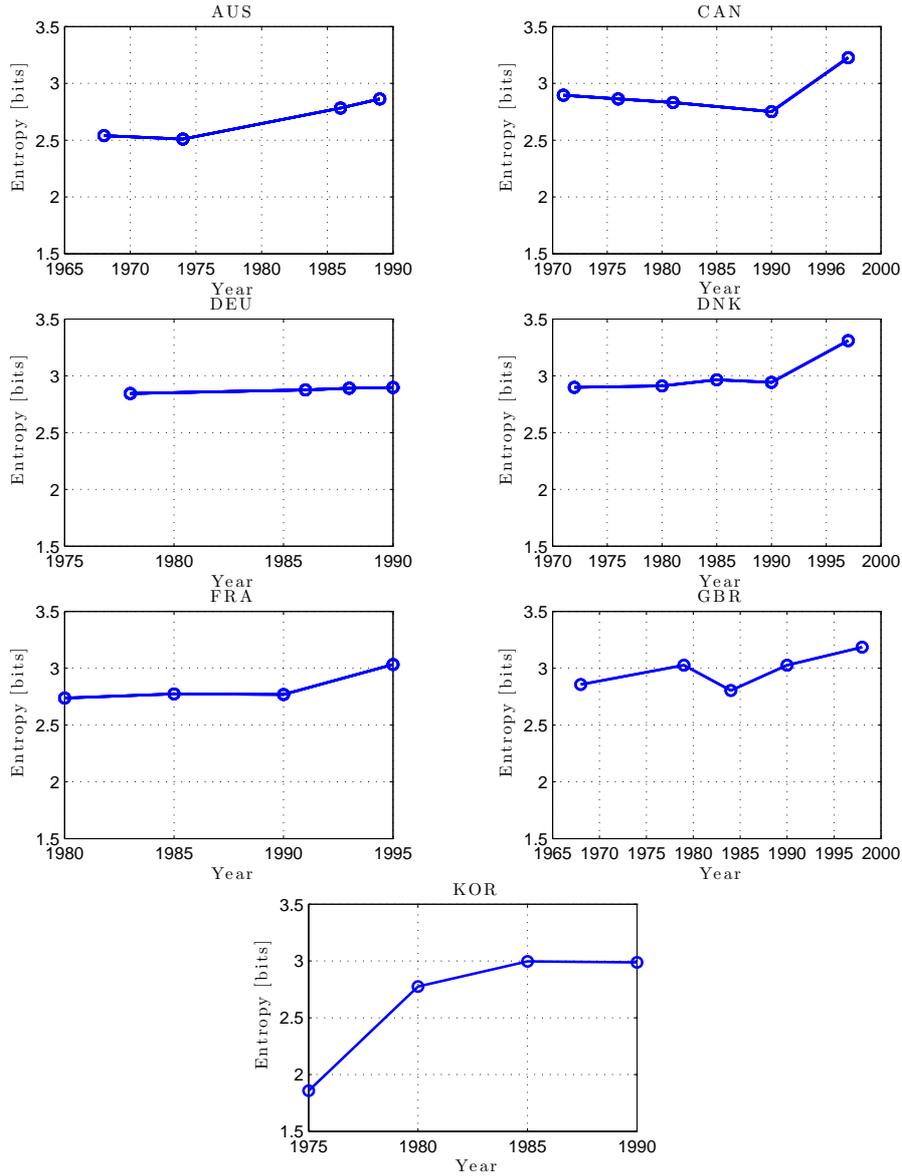

\centering
   \begin{subfigure}[b]{0.48\textwidth}
   \includegraphics[width=1\linewidth]{fig_AUS.eps}
\end{subfigure}
$\:$
\begin{subfigure}[b]{0.48\textwidth}
   \includegraphics[width=1\linewidth]{fig_CAN.eps}
\end{subfigure}

   \begin{subfigure}[b]{0.48\textwidth}
   \includegraphics[width=1\linewidth]{fig_DEU.eps}
\end{subfigure}
$\:$
\begin{subfigure}[b]{0.48\textwidth}
   \includegraphics[width=1\linewidth]{fig_DNK.eps}
\end{subfigure}

   \begin{subfigure}[b]{0.48\textwidth}
   \includegraphics[width=1\linewidth]{fig_FRA.eps}
\end{subfigure}
$\:$
\begin{subfigure}[b]{0.48\textwidth}
   \includegraphics[width=1\linewidth]{fig_GBR.eps}
\end{subfigure}

   \begin{subfigure}[b]{0.48\textwidth}
   \includegraphics[width=1\linewidth]{fig_KOR.eps}
\end{subfigure}
\caption[TEST]{Economic complexity $H(\mathcal{S})$ in bits}
\label{fig:results}
\end{figure*}

\begin{table}
\centering
\begin{subtable}{.5\textwidth}
\centering

\begin{tabular}{cccc}\hline
         & $n$ & [bits]   & [\%] \\ \hline
         
AUS 1968 &  30 & 2.540  &  51.76  \\
AUS 1974 &  30 & 2.509  &  51.14  \\
AUS 1986 &  30 & 2.782  &  56.70  \\
AUS 1989 &  30 & 2.863  &  58.34  \\ \hline

CAN 1971 & 31  & 2.896  &  58.46 \\
CAN 1976 & 31  & 2.863  &  57.78 \\
CAN 1981 & 31  & 2.831  &  57.15 \\
CAN 1990 & 31  & 2.751  &  55.52 \\
CAN 1997 & 32  & 3.226  &  64.53 \\ \hline

DEU 1978 & 29  & 2.844  &  58.53 \\
DEU 1986 & 29  & 2.876  &  59.20 \\
DEU 1988 & 29  & 2.890  &  59.48 \\
DEU 1990 & 29  & 2.897  &  59.63 \\ \hline

DNK 1972 & 28  & 2.898  &  60.29 \\
DNK 1980 & 28  & 2.911  &  60.56 \\
DNK 1985 & 28  & 2.965  &  61.67 \\
DNK 1990 & 28  & 2.942  &  61.20 \\
DNK 1997 & 36  & 3.310  &  64.02 \\
\end{tabular}

\end{subtable}
\begin{subtable}{.5\textwidth}
\centering

\begin{tabular}{cccc}\hline
         & $n$ & [bits]   & [\%] \\ \hline 

FRA 1980 & 31  & 2.737  &  55.25 \\
FRA 1985 & 31  & 2.773  &  55.98 \\
FRA 1990 & 31  & 2.768  &  55.88 \\
FRA 1995 & 37  & 3.033  &  58.23 \\ \hline

GBR 1968 & 31  & 2.857  &  57.67 \\
GBR 1979 & 31  & 3.027  &  61.10 \\
GBR 1984 & 31  & 2.804  &  56.59 \\
GBR 1990 & 31  & 3.027  &  61.10 \\
GBR 1998 & 37  & 3.185  &  61.14 \\ \hline

KOR 1975 & 346 & 1.858  &  22.03 \\
KOR 1980 & 359 & 2.775  &  32.69 \\
KOR 1985 & 363 & 2.996  &  35.23 \\
KOR 1990 & 369 & 2.988  &  35.04 \\ 
\end{tabular}

\end{subtable}
\caption{Economic complexity $H(\mathcal{S})$, in bits and
as a percentage of the maximum level $\log_2(n)$.}
\label{tab:results}
\end{table}

First, we observe that the complexity of each economy is low relative to
its maximum possible value. That is to say, relative to the complexity
that would prevail were all sectors equally likely as destinations of
the cross-sectoral outputs. For small input-output tables, the
complexity is about half of the maximum level. Across all of the datasets,
the complexity is sufficiently low so as to discriminate between at most
10 effective output destinations (3.3 bits) at most, and at least 4
destinations (1.9 bits).

Second, note that while the South Korean dataset (KOR) specifies the economy
at a much finer resolution than the other sets, the complexity is
still comparable to them. This is consistent with the results
presented in \cite{reifferscheidt2014average}, using disagregated
tables from the US economy. In that analysis it was found that the
number of intersectoral links grew at a rate below $\log(n)$. If we
express that structure as a directed graph with the nonzero elements
having equal weights then we would expect the $H(\mathcal{S})$
to be of order $\log_2(\log_2(n))$. Based on this result, an advanced economy
$\mathcal{S}$ with $n= 369$ sectors would be expected to have an entropy less
than 3.1 bits, which is what we observe here.

Third, $H(\mathcal{S})$ also registers structural economic
change. This is most dramatic in the South Korean economy. Between
1975 and 1980, the entropy increased by nearly one bit which
corresponds to a doubling of the complexity in five years. This is
reflective of its rapid state-led industrialization
process. By contrast, while economic complexity of the United Kingdom
(GBR) increases between 1968 and 1979, it exhibits a notable drop
by the mid-1980s, corresponding to a fifteen percent reduction of
complexity. This would reflect the significant decline of the
manufacturing industry, while the restructuring of the UK economy is
followed by a subsequent rise in complexity. The German economy (DEU)
shows a stable level of complexity in the period prior to
unification. The trend in the Canadian economy (CAN) was similar, until
the mid-1990s.

\section{Conclusions}

Starting from an input-output model of an economic system, we have
developed a measure of its complexity based on Shannon
entropy. Following Leontief's perspective of the production process as
a circular flow, we formuled the production process as a Markov
chain. This enabled us to derive an operational measure of economic
complexity as the average code length required to encode the
destination of an output in the production process.

We applied the measure to real data from seven national economies. It
was observed that the complexity of each economy is substantially below the
maximum possible value. The results for the larger data sets were also
consistent with scaling laws observed in other economies. In addition,
the measure was found to register structural economic changes of
industrialization and deindustrialization.

These results suggest that $H(\mathcal{S})$ is reasonably similar among
advanced industrialized economies (approximately 3 bits in the 1990s),
and therefore related to productivity levels and other economic
development indicators. While there is no clear relation between small variations
in the entropy and growth rates of outputs, the rapid structural
changes registered in $H(\mathcal{S})$ do appear to be associated
with changes in output trajectory of the system, as in the case of
South Korea and the United Kingdom mentioned above. Further research
using updated input-output data from the original national statistics
bureaus would be needed to assess this as well as the possible
noise fluctations in the estimated entropies. Such data would need to
extend the time-series into the last two decades. It would then be
possible to quantify the structural effects of the financial crisis of
2007-2008 and its repercussions.


\small{
\bibliographystyle{ieeetr}
\bibliography{refs_entropy,paul_bibliography}
}
\end{document}